\newcommand*{\jf}[1]{\relax}
 \providecommand\@dotsep{5}
 \def\listtodoname{List of Todos}
 \def\listoftodos{\@starttoc{tdo}\listtodoname}
\newtheorem{thm}{Theorem}
\numberwithin{equation}{section}
\newcommand{\RR}{{\mathbb {R}}}
\newcommand{\CC}{{\mathbb{C}}}
\newcommand{\mb}{\overline{m}}
\newcommand{\Lie}[2]{\mathscr L_{#1}{#2}}
\renewcommand*{\i}{\mathrm{i}}
\newcommand*{\dd}{\mathrm{d}}
\newcommand*{\e}{\mathrm{e}}
\newcommand*{\del}{\partial}
\newcommand*{\thorn}{\mbox{\th}}
\renewcommand*{\thorn}{\mbox{\th}}
\newcommand*{\thorp}{\thorn'}
\newcommand*{\etp} {\eth'}
\newcommand*{\scri}{\mathscr{I}}
\newcommand*{\cC}{\mathcal{C}}
\newcommand*{\cT}{\mathcal{T}}
\newcommand*{\cS}{\mathcal{S}}
\newcommand*{\cO}{\mathcal{O}}
\newcommand*{\cP}{\mathcal{P}}
\newcommand*{\cQ}{\mathcal{Q}}
\newcommand*{\sB}{\mathscr{B}}
\newcommand*{\sS}{\mathscr{S}}
\newcommand*{\sT}{\mathscr{T}}
\newcommand*{\news}{\mathscr{N}}
\newcommand*{\ethc}[1][]{{\eth^{#1}_c}}
\newcommand*{\etbc}[1][]{{\overline{\eth}{}^{#1}_c}}
\newcommand*{\thorpc}{{\thorp_c}}
\newcommand*{\mbb}{\mathbf{m}}
\newcommand*{\lbb}{\mathbf{l}}
\newcommand*{\mbbar}{\overline{\mathbf{m}}}
\newcommand*{\mub}{\pmb{\mu}}
\newcommand*{\Nb}{\overline{N}}
\newcommand*{\Bb}{\overline{\B}}
\newcommand*{\zb}{{\bar{z}}}
\newcommand*{\vb}{{\bar{v}}}
\newcommand*{\Pb}{\overline{\cP}}
\newcommand*{\Qb}{\overline{\cQ}}
\newcommand*{\rhob}{{\bar\rho}}
\newcommand*{\sigmab}{{\bar\sigma}}
\newcommand*{\taub}{{\bar\tau}}
\newcommand*{\psib}{{\bar\psi}}
\newcommand*{\xib}{{\bar\xi}}
\newcommand*{\B}{R}
\newcommand*{\fM}{\mathfrak{m}}
\DeclareMathOperator{\ad}{ad}
\begin{document}
\title{A new look at the Bondi-Sachs energy-momentum}
\author[J. Frauendiener]{J\"org Frauendiener}
\address{Department of Mathematics and Statistics, University of Otago, PO Box 56, Dunedin 9054, New Zealand}
\email{joergf@maths.otago.ac.nz}
\author[C. Stevens]{Chris Stevens}
\address{Department of Mathematics and Statistics, University of Otago, PO Box 56, Dunedin 9054, New Zealand}
\email{cstevens@maths.otago.ac.nz}

\begin{abstract}
How does one compute the Bondi mass on an arbitrary cut of null infinity $\scri$ when it is not presented in a Bondi system? What then is the correct definition of the mass aspect? How does one normalise an asymptotic translation computed on a cut which is not equipped with the unit-sphere metric? These are questions which need to be answered if one wants to calculate the Bondi-Sachs energy-momentum for a space-time which has been determined numerically. Under such conditions there is not much control over the presentation of $\scri$ so that most of the available formulations of the Bondi energy-momentum simply do not apply. The purpose of this article is to provide the necessary background for a manifestly conformally invariant and gauge independent formulation of the Bondi energy-momentum. To this end we introduce a conformally invariant version of the GHP formalism to rephrase all the well-known formulae. This leads us to natural definitions for the space of asymptotic translations with its Lorentzian metric, for the Bondi news and the mass-aspect. A major role in these developments is played by the ``co-curvature'', a naturally appearing quantity closely related to the Gauß curvature on a cut of~$\scri$.
\end{abstract}
\thanks{Supported by the Marsden Fund Council from Government funding, managed by Royal Society Te Apārangi. JF is grateful to L. Szabados for several discussions on this topic.}
\maketitle

\section{Introduction}
\label{sec:introduction}

One of the main achievements in the theoretical development of Einstein's theory of gravity was the discovery that a gravitating system may lose energy through the emission of gravitational waves. The famous ``mass-loss formula'' was derived by H.~Bondi and his group in 1962~\cite{Bondi:1962} in the axisymmetric case and shortly after by R.~Sachs in the general case~\cite{Sachs:1962a}. The main consequence of the mass-loss formula is the proof that --- provided that Einstein's theory is correct --- gravitational waves must exist, nowadays an experimentally well established fact~\cite{Abbott:2016,Abbott:2016a,Abbott:2017,Abbott:2017a}.

The mass-loss formula involves two quantities, the energy flux due to gravitational waves and the total mass of the system in question. By their very nature these quantities are defined at infinity where the global properties of a space-time reside. The energy flux measures the intensity of the gravitational radiation carried by the waves in each direction away from the system at each instant of time. The total mass of the system depends on time and is computed by a surface integral of the mass aspect over the sphere of all outgoing directions at an instant of retarded time.

Obviously, the Bondi mass and the gravitational flux are crucially important quantities describing --- at least in part --- a gravitationally active system. However, it is not straightforward to ``measure'' these quantities. Clearly, the description above implies that the Bondi mass is a global quantity which resides at infinity. But things are even more complicated. The Bondi mass is computed from the components of the Bondi-Sachs energy-momentum, a 4-covector defined on a cut of null infinity. Physically speaking, a cut is the idealised spherical surface of instants in retarded time when observers distributed in all directions at infinite distance from the system  measure the gravitational wave signal. This information provides the so called \emph{mass-aspect} on the cut and the Bondi-Sachs 4-momentum is obtained by integrating the mass-aspect against four functions defined on the cut which can be interpreted in some sense as translations. This is in line with the fact that energy and momentum are associated in physics with translational symmetries~\cite{Ashtekar:1981}.

In the mathematical treatment of the situation there is some gauge freedom in the description of null infinity and in all discussions of the Bondi energy-momentum this freedom was used to simplify the description as much as possible. The freedom in the description consists of essentially three types, the choice of coordinates, the choice of a frame and the choice of a conformal factor, when the approach is based on Penrose's conformal treatment of asymptotically flat space-times~\cite{Penrose:1965}. When these simplifying choices are made then the resulting formulae for the Bondi energy-momentum are deceptively simple. However, when one approaches null infinity in a way which is not in line with these simplifying assumptions then the simple formulae are no longer valid. But this is exactly the situation which one is facing when space-times are computed numerically using codes which are capable of reaching null infinity in finite time such as the codes based on the characteristic formulations of the Einstein equations \cite{Bishop:1997,Winicour:2001} or the conformal field equations~\cite{Hubner:1996a,Hubner:2001a,Frauendiener:1998,Frauendiener:1998a,Frauendiener:2002c,Frauendiener:2004}. In these cases, the gauge is dictated by the  formulation of the equations which makes the numerical treatment as well-behaved as possible and this is, in general, not the same as the choices needed for simplifying the treatment of null infinity.

In this paper we give a prescription for the determination of the Bondi energy-momentum in a general gauge. This is an important task not only because of the physical relevance of the Bondi mass but also because the Bondi-Sachs mass-loss formula is a very good test for the validity of a numerical code.

The outline of the paper is as follows. In sec.~\ref{sec:asympt-flatn-struct} we collect the necessary facts about null infinity following the standard sources such as e.g.~\cite{Penrose:1986,Frauendiener:2004a,Geroch:1977,ValienteKroon:2016}. Sec.~\ref{sec:conf-ghp-form} is devoted to the introduction of a manifestly conformally invariant formalism based on conformal densities. This formalism is an extension of the more familiar GHP formalism~\cite{Geroch:1973b,Penrose:1984a}. It is used in sec.~\ref{sec:cut-systems-null} and \ref{sec:bms-algebra} to study the structure of cut systems of $\scri$ and of the BMS algebra and, in particular, to derive and characterize the ideal of asymptotic translations in sec.~\ref{sec:lorentz-metric-space}. Finally, in sec.~\ref{sec:mass-aspect} we discuss the Bondi energy-momentum and prove (again) the mass-loss formula for gravitational radiation in asymptotically flat space-times in sec.~\ref{sec:mass-loss-formula}. We finish the paper with a brief description of how one would use the formulae obtained earlier to compute the Bondi energy-momentum in a space-time for which null infinity is not presented in a Bondi gauge.

We use the same conventions and notation as~\cite{Penrose:1986} throughout.

\section{Asymptotic flatness and the structure of null infinity}
\label{sec:asympt-flatn-struct}

In this section we will briefly outline the specific assumptions that are made in the definition of the Bondi energy-momentum. We describe this not from a point of view within the physical space-time but from within a conformally related space-time. We are interested here in an asymptotically flat space-time $(\widetilde{M},\tilde{g}_{ab})$. By definition, this means that we may regard $\widetilde{M}$ as embedded into a larger `unphysical' space-time $(M,g_{ab})$ where the metrics are related on $\widetilde{M}$ by
\begin{equation}
  \label{eq:1}
  g_{ab} = \Omega^2\tilde{g}_{ab}
\end{equation}
for a function $\Omega:M\to\RR$ with $\Omega(x)>0$ if and only if  $x\in\widetilde{M}$. We denote the zero-set of $\Omega$ by $\scri$, usually called \emph{null infinity},  and assume that it is a regular submanifold of $M$ so that $\scri=\{x \in M : \Omega(x)=0, \dd\Omega(x) \ne0\}$. Notice, that the conformal factor $\Omega$ is not unique, any function $\widehat\Omega=\Omega\Theta$ with $\Theta>0$ on $\widetilde{M}\cup\scri$ together with the metric $\hat{g}_{ab}=\Theta^2 g_{ab}$ satisfies the same conditions. We assume that $\scri$ has two connected components $\scri=\scri^+ \cup \scri^-$ each with the topology $S^2\times \RR$, see~\cite{Geroch:1977,Penrose:1965,Frauendiener:2004} for more details. The sets $\scri^\pm$ are called \emph{future} and \emph{past null infinity}. In what follows we will implicitly assume that $\scri$ refers to $\scri^+$. Similar considerations hold for $\scri^-$.

The curvature tensors of the two metrics differ by terms containing derivatives of the conformal factor $\Omega$. The various pieces of the Riemann tensors, i.e., the Weyl tensor $C_{abc}{}^d$, the tensor $\Phi_{ab}=-\frac12 (R_{ab}- \frac14 R g_{ab})$, defined in terms of the Ricci tensor $R_{ab}$, and the curvature scalar $\Lambda=\frac1{24}R$, are related on $\widetilde{M}$ according to the formulae 
\begin{equation}
  \label{eq:2}
  \begin{aligned}
    \widetilde{C}_{abc}{}^d &= C_{abc}{}^d \\
    \Omega\widetilde{\Phi}_{ab} &= \Omega\Phi_{ab}  + \nabla_a\nabla_b\Omega - \frac14 g_{ab} \Box\Omega,\\
    \tilde\Lambda &= \Omega^2\Lambda - \frac14 \Omega \Box \Omega + \frac12 \nabla_a\Omega \nabla^a\Omega.
\end{aligned}
\end{equation}

We assume that the vacuum Einstein equations hold in $\widetilde{M}$ near $\scri$. Thus, the physical Einstein tensor $\widetilde{G}_{ab}$ vanishes and so does the physical Ricci tensor $\widetilde{R}_{ab}$. Hence, the equations
\begin{equation}
  \label{eq:3}
  \begin{aligned}
    0 &= \Omega\Phi_{ab}  + \nabla_a\nabla_b\Omega - \frac14 g_{ab} \Box\Omega,\\
    0 &= \Omega^2\Lambda - \frac14 \Omega \Box \Omega + \frac12 \nabla_a\Omega \nabla^a\Omega
\end{aligned}
\end{equation}
hold on $\widetilde{M}$. Since all geometric quantities are smooth these equations extend smoothly to $\scri$ and we can write them in the form
\begin{align}
    \nabla_a\nabla_b\Omega - \frac14 g_{ab} \Box\Omega &= \cO(\Omega),  \label{eq:4}\\
    \nabla_a\Omega \nabla^a\Omega &= \cO(\Omega).\label{eq:5}
\end{align}
The first of these equations is termed the \emph{asymptotic Einstein condition} in~\cite{Penrose:1986} while the second equation shows that $\scri$ is a regular null hyper-surface. A fundamental consequence of this construction is that the Weyl tensor vanishes on $\scri$. This allows us to introduce the rescaled Weyl quantities $\psi_i=\Omega^{-1}\Psi_i$ for $i=0:4$, smooth complex valued functions on $M$, where $\Psi_i$ are components of $C_{abc}{}^d$ with respect to the null tetrad. For their definition, as for the definition of all the spin-coefficients etc we refer to~\cite{Penrose:1984a}.

As the next step we collect all the equations on $\scri$ that are relevant. We write them down with respect to a null tetrad which is chosen as follows. The metric $g_{ab}$ when restricted to $\scri$ is degenerate  i.e., at every point on $\scri$ there is a 1-dimensional subspace of tangent vectors which kill the metric. Let $n^a$ be a non-zero vector in that subspace. We complement it by a complex null vector $m^a$ and an additional real null vector $l^a$ to a null tetrad $(n^a,m^a,\mb^a,l^a)$ for $M$ at every point on $\scri$. The dual (co-vector) basis is $(l_a,-\mb_a,-m_a,n_a)$. In most of what follows we will be concerned only with quantities which are intrinsic to $\scri$. The corresponding basis and dual basis intrinsic to $\scri$ are obtained by dropping the last members of the 4-dimensional bases.

To simplify things and in view of later discussions we now assume the existence of a scalar function $s:\scri\to\RR$ with the property that $D's=n^a\nabla_as\ne0$ everywhere and we assume that the complex null vector $m^a$ has been chosen so that $\delta s = m^a\nabla_a s=0$. Then, the level sets of constant $s$ are ``cuts'' of $\scri$, i.e., 2-dimensional surfaces everywhere transverse to its null generators, i.e., the integral curves to $n^a$.

With this setup we are now in a position to introduce the GHP-formalism~\cite{Geroch:1973b,Penrose:1984a} and, in particular, the operators $\thorp$, $\eth$ and $\etp$ acting in the directions of the basis vectors tangent to $\scri$. 

The gradient of the conformal factor $\Omega$ on $M$ defines a vector field\footnote{Note, that the sign makes the vector future-pointing on $\scri^+$. On $\scri^-$ one would have a different sign.} $N^a:=-g^{ab}\nabla_b\Omega$ which, when restricted to $\scri$ is null, i.e., such that $N^a=An^a$ for some scalar $A$ on $\scri$. When the conformal factor is changed by a rescaling $\Omega \mapsto \Theta \Omega$ then both $g_{ab}$ and $N^a$ are changed, $g_{ab} \mapsto \Theta^2\,g_{ab}$ and $N^a\mapsto \Theta^{-1}N^a$, but the tensor
\begin{equation}
  \Gamma_{ab}{}^{cd} := g_{ab} N^cN^d\label{eq:6}
\end{equation}
remains unchanged. This is the ``universal structure tensor'' as defined by Geroch~\cite{Geroch:1977}. It is closely related to the ``strong conformal geometry'' defined by Penrose~\cite{Penrose:1986} which is essentially the ``square root'' of the tensor. Its relevance becomes clearer when one considers a generator of $\scri$ with tangent vector $N^a$ and associated  parameter $u$ defined by $\dd u(N)=1$. The parameter is called the Bondi time and it measures the retarded time along that generator. Under a change of conformal factor it changes according to $\dd u \mapsto \dd \hat{u} = \Theta \dd u$. At the same time, the length $\dd l(v)=\sqrt{-g_{ab}v^av^b}$ of any vector $v^a$ tangent to a cut through the generator changes by the same factor $\Theta$ so that the ratio $\dd l:\dd u$ is unchanged. This is a remnant of the fact that, in relativity, space and time intervals do not have independent meaning. The spatial and temporal scales are tied together by the constant speed of light. At every point outside of $\scri$ this is encoded in the Lorentzian signature of the metric. But restricted to $\scri$ where the metric is degenerate this fundamental fact of Einstein's theory is still maintained in the form of this ratio and its conformal invariance. 

Having set up the null tetrad we are now in a position to discuss the implications for the spin-coefficients on $\scri$. The first obvious consequences follow from the fact that $\scri$ is a null hypersurface: it is generated by a null geodetic congruence tangent to $n^a$ so that on $\scri$ the equations $\kappa'=0$ and $\bar\rho' = \rho'$ must hold.The fact that we aligned $m^a$ with the 2-surfaces of constant $s$ implies that also $\bar\rho=\rho$ on $\scri$. More information can be gleaned from the asymptotic Einstein condition~\eqref{eq:4} which after inserting $\nabla_a\Omega = -A n_a + \Omega X_a$ can be rewritten as
\[
  \nabla_a A n_b + A \nabla_a n_b - A n_a X_b = g_{ab} X + \Omega X_{ab}
\]
where the fields $X$, $X_a$ and $X_{ab}=X_{ba}$ are some irrelevant smooth fields on $M$. Taking components and evaluating the combinations which are free of any $X$ fields on $\scri$ yields the equations
\[
  \begin{gathered}
    \kappa' = 0, \qquad \sigma'=0,\qquad \bar\rho' = \rho',\\
    \thorp A + \rho' A = 0 , \qquad \eth A = 0.
  \end{gathered}
\]
The next set of equations comes from the curvature equations which relate the curvature to derivatives of the spin-coefficients. Those equations which are intrinsic to $\scri$ are
\begin{align}
  \label{eq:7}
  \thorp \rho' - \rho'^2 &= \Phi_{22}, \\
  \label{eq:8}
  \etp\rho' &= \Phi_{21}, \\
  \label{eq:9}
  \eth\rho - \etp \sigma &= \Phi_{01}, \\
  \label{eq:10}
  \thorp \sigma - \rho' \sigma - \eth\tau + \tau^2 &= -\Phi_{02}, \\
  \label{eq:11}
  \thorp\rho - \rho\rho' + 2 \Lambda &= \etp\tau - \tau\taub.
\end{align}
Note, that a consequence of~\eqref{eq:11} is that
\[
  \etp\tau = \eth \taub.
\]
The commutators between the three operators also contain information about the spin-coefficients. Restricted to $\scri$ they are when acting on a GHP scalar with weights $(p,q)$
\begin{align}
  \thorp\eth - \eth\thorp &= \rho'\eth - \tau \thorp - p (\rho'\tau - \Phi_{12}),\label{eq:12}\\
  \thorp\etp - \etp\thorp &= \rho'\etp - \taub \thorp - q (\rho'\taub - \Phi_{21}),\label{eq:13}\\
  \eth\etp - \etp\eth &=  - (p - q) K,\label{eq:14}
\end{align}
where $K=\Phi_{11} + \Lambda -\rho\rho' = \bar{K}$ is the remnant of the complex curvature on $\scri$, where in fact it is real.

Finally, we need the relevant Bianchi identities. The Bianchi identity for the physical space-time  yields an equation for the rescaled Weyl spinor components $\psi_i$ in terms of the spin-coefficients which looks superficially like a zero-rest-mass field equation for a spin-2 field. The Bianchi identity for the conformal space-time lead to equations for the Ricci components in terms of the Weyl scalars. Explicitly, the relevant equations intrinsic to $\scri$ are
\begin{itemize}[wide]
\item the intrinsic propagation equations along $\scri$ for $\psi_{ABCD}$
\begin{align}
  \thorp \psi_0 - \rho' \psi_0 - \eth \psi_1 + 4 \tau \psi_1 &= 3 \sigma \psi_2,\label{eq:15}\\
  \thorp \psi_1 - 2\rho' \psi_1 - \eth \psi_2 + 3 \tau \psi_2 &= 2 \sigma \psi_3,\label{eq:16}\\
  \thorp \psi_2 - 3\rho' \psi_2 - \eth \psi_3 + 2 \tau \psi_3 &= \sigma \psi_4,\label{eq:17}\\
  \thorp \psi_3 - 4\rho' \psi_3 - \eth \psi_4 +  \tau \psi_4 &= 0\label{eq:18},
\end{align}
\item and the intrinsic propagation equations along $\scri$ for the Ricci components in terms of the $\psi$'s
\begin{align}
  \thorp \Phi_{00} - \etp \Phi_{01} + 2 \thorn \Lambda &= -2 \tau \Phi_{10} - 2 \taub \Phi_{01} + \rho' \Phi_{00} + \sigmab \Phi_{02} + 2 \rho \Phi_{11} + A \psi_2, \label{eq:19}\\
    \thorp \Phi_{01} - \etp \Phi_{02} + 2 \eth\Lambda &= - 2\tau \Phi_{11} -  \taub \Phi_{02} + 2\rho' \Phi_{01} + 2 \rho \Phi_{12} ,\label{eq:20}\\
  \thorp \Phi_{01} - \eth \Phi_{11} + \eth\Lambda &= -2 \tau \Phi_{11} - \taub \Phi_{02} + \rho' \Phi_{01} + \sigma \Phi_{21} + \rho \Phi_{12} + A \psib_3,\label{eq:21}\\
  \thorp \Phi_{11} - \etp \Phi_{12} + \thorp\Lambda &= - \taub \Phi_{12} - \tau \Phi_{21} + 2 \rho' \Phi_{11} + \rho \Phi_{22} ,\label{eq:22}\\
  \thorp \Phi_{20} - \etp \Phi_{21}  &= -2 \taub \Phi_{21} + \rho' \Phi_{20} + \sigmab \Phi_{22} + A \psi_4,\label{eq:23}\\
  \thorp \Phi_{21} - \etp \Phi_{22}  &= - \taub \Phi_{22} + 2 \rho' \Phi_{21} .\label{eq:24}
\end{align}
\end{itemize}
There are no equations for $\thorp\psi_4$ and $\thorp\Phi_{22}$. Strictly speaking,~\eqref{eq:19} is not intrinsic since it contains $\thorn\Lambda$. However, it is a complex equation and its imaginary part is intrinsic. Furthermore, we have  two equations for $\thorp\Phi_{01}$ which could be combined into a ``propagation equation'' and a ``constraint equation''.

Since we have values for all the Ricci components except $\Phi_{00}$ and $\Phi_{11}$ we would expect that all equations which contain only those known components will give us relation between the remaining spin-coefficients or are satisfied identically. For instance, ~\eqref{eq:24} is easily seen to be an identity once we insert the values on $\scri$.

We will come back to these equations after we have developed some more background.

\section{The conformal GHP formalism}
\label{sec:conf-ghp-form}

Since the existence of $\scri$ is due to the conformal compactification of a physical space-time with a conformal factor that is fixed only up to the multiplication with a positive function the entire physical content on $\scri$ must be invariant under conformal rescalings. This property is usually exploited by choosing a conformal gauge, i.e., by fixing the conformal factor so that calculations simplify. We will  proceed here in a different way and maintain the conformal invariance in all our operations. To this end we need to introduce conformally weighted quantities and conformally invariant derivative operators (see~\cite{Penrose:1984a}).

We call $\eta$ a \emph{conformal density of weight} $w$ if it changes under the change $\Omega\mapsto\Omega \Theta$ by $\eta \mapsto \Theta^w \eta$. Since we are using the GHP formalism which partially implements a frame invariance such quantities $\eta$ will in general also  be GHP weighted. We denote the weights of a conformally weighted GHP quantity $\eta$ in the form $[w;p,q]$. In order to incorporate the conformal invariance within the GHP formalism one needs to decide about how the null tetrad transforms under conformal rescaling. Here, the most natural choice is to map
\[
  l^a \mapsto \Theta^{-2}l^a,\qquad 
  m^a \mapsto \Theta^{-1}m^a,\qquad 
  n^a \mapsto n^a.
\]
Even for a conformal density $\eta$ the derivatives $\thorp \eta$ and $\eth \eta$ will not be conformal densities. Instead we find for $\eta$ of weights $[w;p,q]$ that
\[
  \begin{aligned}
  \thorp \eta &\mapsto \Theta^w(\thorp \eta + (w+p+q) D'\theta \, \eta), \\
  \eth \eta &\mapsto \Theta^{w-1}(\eth \eta + (w+q) \delta\theta \, \eta), \\
  \etp \eta &\mapsto \Theta^{w-1}(\etp \eta + (w+p) \delta'\theta \, \eta).
\end{aligned}
\]
where we have defined $\theta = \log\Theta$. In order to obtain conformal densities one needs to eliminate the derivatives of $\theta$. To this end one uses the inhomogeneous transformation of some of the spin-coefficients, in this case of $\tau$ and $\rho'$ which transform according to
\[
  \tau \mapsto \Theta^{-1}(\tau - \delta \theta), \quad \rho' \mapsto \rho' - D'\theta.
\]
Combining these two inhomogeneous behaviours leads to the introduction of the conformally invariant GHP operators acting on a quantity $\eta:[w;p,q]$
\begin{align}
  \thorpc \eta &= \thorp \eta + (w + p + q)\rho' \eta,\label{eq:25}\\
  \ethc \eta &= \eth \eta + (w + q)\tau \eta,\label{eq:26}\\
  \etbc \eta &= \etp \eta + (w + p)\taub \eta.\label{eq:27}
\end{align}
Note, that in the context of $\scri$ there is no need to consider $\thorn_c$ since it is an outward derivative. Similarly, we do not need the conformal $\etp_c$ operator. This would be defined in terms of $\tau'$ which is also an extrinsic quantity, being defined in terms of the parallel transport of $n^a$ along $l^a$ away from $\scri$. Thus, the two derivatives transverse to the generators but tangent to $\scri$ are represented by $\ethc$ and its complex conjugate $\etbc$, \emph{which is not the same as $\etp_c$}.

For $\eta$ a conformal density of weight $w$ the derivatives have weights
\[
  \thorpc\eta : [w;p-1,q-1], \quad
  \ethc\eta : [w-1;p+1,q-1], \quad
  \etbc\eta : [w-1;p-1,q+1].
\]

Next, we need the commutators among these operators. Some calculation yields the rather simple result which holds on $\scri$ (i.e., with $\sigma' = 0 = \kappa' = \rho' - \rhob'$)
\begin{align}
  [\thorpc,\ethc]\eta &= (w + q) \cP \eta, \label{eq:28}\\
  [\thorpc,\etbc]\eta &= (w + p) \Pb \eta, \label{eq:29}\\
  [\ethc,\etbc]\eta &= -(p-q) \cQ \eta. \label{eq:30}
\end{align}
The commutators define the \emph{quantities}
\begin{equation}
  \cP := \thorp\tau - \eth\rho', \qquad \cQ := K - \etp\tau = \Qb,\label{eq:31}
\end{equation}
which are conformal densities $\cP:[-1;0,-2]$, $\Pb:[-1;-2,0]$ and $\cQ:[-2;0,0]$. Recall that on $\scri$ both $K$ and $\etp\tau$ are real. While they are not conformally invariant individually, their combination is. Similarly, the individual terms defining $\cP$ are not conformally invariant but their combination is a conformal density. This can be checked explicitly by going through the individual transformations.

Finally, we evaluate the Jacobi identity for these operators in order to get equations between these commutator quantities. There is only one non-trivial combination to consider, namely
\[
  [\thorpc,[\ethc ,\etbc]] + [\ethc,[\etbc ,\thorpc]] + [\etbc,[\thorpc ,\ethc]] = 0.
\]
Evaluating the three terms acting on an arbitrary conformal density $\eta:[w;p,q]$ yields
\[
(w+q)\left[\etbc\cP + \thorpc\cQ\right] - (w+p) \left[\ethc \Pb + \thorpc\cQ\right] = 0.
\]
Since this holds for all $\eta$, i.e., for all weights, we get the equations
\begin{equation}
\label{eq:32}
  \thorpc\cQ  + \etbc\cP  = 0, \qquad \thorpc\cQ + \ethc \Pb = 0
\end{equation}
with the consequence that $\ethc \Pb = \etbc\cP$.

Returning now to the equations which are implied on $\scri$. It is easily verified that $\sigma$ is a conformal density with weight $[-2;3,-1]$. Also, the factor $A$ is a conformal density with weights $A:[-1;1,1]$ satisfying the equations
\[
  \thorpc A = 0, \qquad \ethc A = 0
\]
on $\scri$.

The components of the rescaled Weyl tensor are all conformal densities with weights
\[
  \psi_k : [k-5;4-2k,0]
\]
and the equations relating them can be written in terms of the conformal derivative operators as
\begin{equation}
  \thorpc \psi_k - \ethc \psi_{k+1} = (3-k) \sigma \psi_{k+2}, \qquad k=0:3.\label{eq:33}
\end{equation}
The Ricci components are not immediately conformal densities and their equations can not be simply rewritten in a conformally invariant form. However, some of them correspond to equations just derived. Consider for instance~\eqref{eq:22}. Inserting the values for the Ricci components from (\ref{eq:7}--\ref{eq:10}) $\Phi$'s yields
\[
  \begin{multlined}
    \thorp \Phi_{11} + \thorp\Lambda - 2 \rho' \Phi_{11} = \etp \eth\rho' - \taub \eth\rho' - \tau \etp\rho'  + \rho \thorp\rho' - \rho\rho'^2 \\
    = \etp \eth\rho' - \taub \eth\rho' - \tau \etp\rho'  + \thorp(\rho \rho') -\rho'\etp\tau  + \rho'\tau\taub + 2 \rho'\Lambda - 2 \rho\rho'^2
\end{multlined}
\]
which can be rewritten in the form
\[
  \thorp (\Phi_{11} + \Lambda -\rho\rho') - 2 \rho' (\Phi_{11} + \Lambda - \rho\rho') = \etp \eth\rho' - \taub \eth\rho' - \tau \etp\rho'   - \rho'\etp\tau  + \rho'\tau\taub.
\]
Adding the terms $\thorp\etp\tau - 2\rho'\etp\tau$ on both sides and using the $[\thorp,\etp]$ commutator yields the second equation in~\eqref{eq:32}.

\section{Cut systems of null infinity}
\label{sec:cut-systems-null}

It is clear from what was discussed so far that the only non-trivial GHP spin-coefficients on $\scri$ are $\rho$, $\sigma$, $\rho'$ and $\tau$. The first two give information which is extrinsic to $\scri$ in the sense that they tell us about the shape of the outgoing null hypersurfaces intersecting $\scri$ in the cuts of constant $s$. The role of $\sigma$ is closely tied to the gravitational radiation which arrives at infinity. While $\rho$ and $\sigma$ describe dynamical properties, the other two spin-coefficients, $\rho'$ and $\tau$ are intrinsic to $\scri$ and describe its non-dynamical, kinematical, structure. Clearly, $\rho'$ determines the conformal gauge in the sense that given its value on $\scri$ one can recreate the conformal factor $\Theta$ from its value at a single cross-section by solving the equation $\thorpc\Theta=0$. But, what is the meaning of $\tau$?

Recall our assumption that there exists a function $s$ on $\scri$ such that its level surfaces are regular cross-sections of $\scri$ and that the frame is adapted to the cuts so that $\delta s = 0$. We take $s$ to be a scalar, i.e., having weights $[0;0,0]$ and consider the commutator
\[
  [\thorp,\eth]s =  \thorp\eth s - \eth\thorp s = \rho' \eth s - \tau \thorp s
\]
which implies that $\tau = \eth\thorp s/\thorp s$. Since $\eth A=0$ we can just as well write $\tau = \eth(A\thorp s)/(A\thorp s)$. The quantity $A\thorp s = N^a\nabla_as = \dd s/\dd u$ is therefore the change of $s$ along the null generators with respect to the Bondi time $u$ which is distinguished by the conformal gauge in operation. It turns out that the more natural quantity to use is the reciprocal $\alpha = \dd u/\dd s$. Since $\dd u = \alpha \dd s$ we may call it the \emph{null lapse}, relating the two notions of time along the null generators. Then $\tau$ describes the change in the passage of Bondi time between two different $s$-cuts, $\tau=0$ indicating that the changes in $u$ and $s$ are proportional across a cut.

The null lapse $\alpha$ is a conformal density with weights $[1;0,0]$ and it satisfies the equation $\ethc\alpha = \eth \alpha + \tau \alpha = 0$. Consider now the commutator
\[
  \thorpc\ethc \alpha - \ethc\thorpc \alpha = - \ethc \thorpc \alpha = \cP \alpha.
\]
This equation determines $\cP$ in terms of the chosen system of cuts
\begin{equation}
  \label{eq:34}
  \cP = - \ethc (\thorpc \alpha/\alpha).
\end{equation}
The other quantity $\cQ = K - \etp\tau$ is mostly determined by the geometry of the cuts since $K$ here is real and therefore equals one half of the Gauß curvature of the cuts. The conformal correction $\etp\tau$ is achieved with information about the cut system.

The vanishing of $\cP$ is the integrability condition for the simultaneous equations $\rho' = \thorp \theta$ and $\tau = \eth\theta$. Thus, when $\cP=0$ we can find a conformal factor $\Theta$ such that both $\rho'$ and $\tau$ vanish after a conformal rescaling. If $\cP$ does not vanish, then only one of those quantities can be made to vanish. This is usually taken to be the convergence $\rho'$ whose vanishing indicates that the metrics induced on the cuts all agree. The remaining non-vanishing of $\tau$ then implies that the chosen cuts are not Bondi cuts, i.e., do not belong to a Bondi system.

In the remainder of the paper we will occasionally refer to two special gauges. The first is what we call the \emph{cylinder gauge} which is characterised by the vanishing of $\rho'$ and choosing the unit-sphere metric as the common metric of the $s$-cuts, so that $K=\frac12$ is constant. The other gauge that we use is a Bondi system which makes the additional assumption that $\tau=0$, i.e., that the null lapse is constant across a cut.

\section{The BMS algebra}
\label{sec:bms-algebra}

An important role in the discussion of the structure of $\scri$ is played by its invariance group, the so called BMS group. This is the group of diffeomorphisms of $\scri$ which leave the universal structure tensor~\eqref{eq:6} invariant. Here, we determine its algebra, i.e., the Lie algebra $\sB$ of infinitesimal BMS generators as the vector fields $X^a$ on $\scri$ for which
\[
  \Lie{X}{\Gamma_{ab}{}^{cd}} = 0.
\]
We write a general vector field $X^a$ as a linear combination of the tetrad vectors tangent to $\scri$
\[
  X^a = \eta n^a + \xib m^a + \xi \mb^a,
\]
where $\eta$ and $\xi$ are conformal densities $\eta:[0;1,1]$ and $\xi:[1;1,-1]$.

A short calculation yields the expressions
\begin{equation}
  \label{eq:35}
  \begin{aligned}
    \Lie{X}{g_{ab}} &= - g_{ab} \left( \etp \xi + \eth \xib - 2\rho'\eta \right) - 2 \mb_a \mb_b\eth \xi - 2 m_a m_b\etp \xib \\
    &\qquad\qquad+ 2l_{(a}\mb_{b)} \left(\thorp \xi + \rho' \xi \right) + 2l_{(a}m_{b)} \left(\thorp \xib + \rho' \xib \right) ,\\
    \Lie{X}{N^a} &= \left( \eta \thorp A - A \thorp \eta + A \tau \xib + A \taub \xi \right) n^a - A \left( \thorp \xib + \rho' \xib  \right) m^a - A \left( \thorp \xi + \rho' \xi  \right) \mb^a.
  \end{aligned}
\end{equation}
The invariance of $\Gamma_{ab}{}^{cd}$ implies that
\[
  \Lie{X}{g_{ab}} N^c  + 2 g_{ab} \,\Lie{X}{N^c} = 0.
\]
Thus,  $\Lie{X}{g_{ab}} \propto g_{ab}$ and $\Lie{X}{N^a} \propto N^a$ so that the off diagonal terms in $\Lie{X}{g_{ab}}$ must vanish as do the terms in $\Lie{X}{N^a}$ not proportional to $n^a$. This means that we get the equations
\[
\thorp \xi + \rho' \xi = 0, \qquad \eth\xi = 0  
\]
together with
\[
  A\etp \xi + A\eth \xib - 2\rho'\eta + 2 (\eta \thorp A - A \thorp \eta + A \tau \xib + A \taub \xi ) = 0.
\]
Taken together we find that a BMS generator $X^a$ is represented by a pair of two conformal densities $X=(\xi,\eta_\xi)$ subject to the equations
\begin{equation}
  \label{eq:36}
  \ethc\xi = 0, \qquad \thorpc\xi = 0, \qquad \thorpc\eta_\xi = \frac12\left( \etbc\xi + \ethc\xib\right).
\end{equation}
Using these equations the commutator between two BMS generators $X$ and $X'$ becomes
\begin{equation}
  [(\xi,\eta),(\xi',\eta')] = \left(\xi\diamond \etbc\xi, \frac12\eta\diamond \etbc\xi + \frac12\eta\diamond \ethc\xib + \xi\diamond\etbc\eta + \xib\diamond\ethc\eta \right)\label{eq:37}
\end{equation}
where $\alpha\diamond\beta = \alpha \beta' - \beta \alpha'$.

It is straightforward to verify that generators of the form $(0,\eta)$ (for which $\thorpc\eta=0$) form a sub-algebra which is in fact an Abelian ideal $\sS$, since $[\sB,\sS] \subset \sS$. The elements of $\sS$ are called (infinitesimal) super-translations. They are represented here as conformal densities $\eta$ on $\scri$ satisfying $\thorpc\eta=0$, thus corresponding to global functions on the sphere of null generators of $\scri$. 

On the other hand, given any cut of $\scri$ the Lie algebra of its isotropy group can be given as $(\xi,\eta_\xi)$ where $\eta_\xi$ vanishes on the cut. Each of these algebras is isomorphic to the Lie algebra $\mathfrak{so}(1,3)$ of infinitesimal Lorentz transformations, a real form of $\mathfrak{sl}(2,\CC)$ (see app.~\ref{sec:asympt-symm-1}). \jf{Say something about only considering the complex form} However, there is no distinguished unique sub-algebra of $\sB$, isomorphic to the Lorentz generators.

Sachs~\cite{Sachs:1962b} has shown that the infinite dimensional ideal of super-translations contains a 4-dimensional sub-ideal $\sT$, the (infinitesimal) translations. In the present context we can obtain this ideal by observing that $\sT$ must be invariant under the adjoint action of $\sB$ on $\sS$. A generator $(\xi,\eta_\xi)\in\sB$ acts on $(0,\eta)$ according to the adjoint action $(0,\eta)\mapsto (0,\ad_\xi\eta):=[(\xi,\eta_\xi),(0,\eta)]$ (in a blatant abuse of notation) where
\[
  \ad_\xi\eta =  \xi\etbc\eta - \frac12\eta\etbc\xi.
\]
A closer look reveals (see app.~\ref{sec:asympt-symm-1}) that $\sT\subset\sS$ is spanned by eigenvectors of $\ad_{\xi_0}$ for some generator $\xi_0$, i.e., conformal densities satisfying\jf{Mention reality condition and spin-1/2 representation}
\begin{equation}
  \xi_0\etbc\eta - \frac12\eta\etbc\xi_0 = \lambda_0 \eta,\label{eq:38}
\end{equation}
where $\lambda_0=\pm\frac12$. This characterisation of the space of asymptotic translations is not very convenient since it makes reference to the Lorentz part $\xi$ of generators. It is possible to give another characterisation of $\sT$ which is much more useful.
\begin{thm}
  Let $\B$ be the conformal density for which
  \begin{equation}
  \etbc\B+\ethc\cQ=0,\label{eq:39}
\end{equation}
  and let $U$ be a conformal density with weights $[1;0,0]$ then
  \begin{equation}
  \ethc[2]U = \B U,\label{eq:40}
\end{equation}
if and only if $\eta=A U\in \sT$.
\end{thm}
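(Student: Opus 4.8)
The plan is to transfer everything to $\eta=AU$ and then prove the two implications separately: the forward one by differentiating the eigenvector characterisation~\eqref{eq:38}, the backward one by a dimension count. On $\scri$ the factor $A$ obeys $\thorpc A=\ethc A=0$, and since $A$ is real with $p=q$ it also satisfies $\etbc A=0$; by the Leibniz rule $\thorpc\eta=A\thorpc U$, $\ethc\eta=A\ethc U$ and $\ethc[2]\eta=A\ethc[2]U$, so that $\thorpc\eta=0\Leftrightarrow\thorpc U=0$ and~\eqref{eq:40} is equivalent to $\ethc[2]\eta=\B\eta$ for $\eta:[0;1,1]$. The weights $\B:[-2;2,-2]$ forced by~\eqref{eq:39} are precisely what makes $\ethc[2]\eta=\B\eta$ a conformally invariant equation of weight $[-1;2,-2]$. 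It thus suffices to show that a super-translation $\eta$ lies in $\sT$ if and only if $\ethc[2]\eta=\B\eta$.

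\emph{Forward direction.} Let $\eta\in\sT$; by linearity I may take $\eta$ to be an eigenvector of $\ad_{\xi_0}$, say with $\lambda_0=+\tfrac12$, so that~\eqref{eq:38} holds for a Lorentz generator $\xi_0$ with $\ethc\xi_0=0$ and $\thorpc\xi_0=0$ (cf.~\eqref{eq:36}). I would apply $\ethc$ twice to~\eqref{eq:38}. At each stage the term $\ethc\xi_0$ drops, while the commutator~\eqref{eq:30} feeds in curvature: since $\xi_0$ has $p-q=2$ one gets $\ethc\etbc\xi_0=-2\cQ\xi_0$, and the spin-raised densities $\ethc\eta$ and $\ethc[2]\eta$ contribute further $\cQ$-terms. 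The expected mechanism is that after the second differentiation all these $\cQ$-terms cancel except a single contribution $(\ethc\cQ)\,\xi_0\eta$, which~\eqref{eq:39} rewrites as $-(\etbc\B)\,\xi_0\eta$. Writing
\[
  Z:=\ethc[2]\eta-\B\eta,
\]
and using~\eqref{eq:38} once more to absorb the $\B\eta$-terms, the identity should collapse to $\xi_0\etbc Z-\tfrac12 Z\etbc\xi_0=\tfrac12 Z$; that is, $Z$ again solves~\eqref{eq:38} with the same $\lambda_0=+\tfrac12$. But $Z$ has weight $[-2;3,-1]$, hence spin weight $2$, whereas the spin-$\tfrac12$ representation structure underlying the appendix shows that $\lambda_0=\pm\tfrac12$ can be attained only on spin-weight-zero densities. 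Therefore $Z=0$, which is~\eqref{eq:40}. The main obstacle here is the bookkeeping of the successive generations of $\cQ$-terms through the two applications of $\ethc$: it is their exact cancellation that pins down $\B$ as the density~\eqref{eq:39}, and the genuine content is concentrated in the final representation-theoretic step that kills the spin-weight-$2$ remainder $Z$.

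\emph{Backward direction.} Let $\mathcal S$ be the complex vector space of super-translations solving $\ethc[2]\eta=\B\eta$. The forward direction gives $\sT\subseteq\mathcal S$, and $\dim_\CC\sT=4$, so it remains to prove $\dim_\CC\mathcal S\le4$. As a super-translation is fixed by its restriction to one cut, this is the dimension of the kernel of $\ethc[2]-\B$ on spin-weight-zero densities over a single $2$-sphere. I would evaluate it by exploiting conformal invariance to pass to a Bondi normalisation, where $\ethc[2]-\B$ reduces to $\eth^2$ whose spin-weight-zero kernel is exactly the $\ell\le1$ harmonics, of complex dimension $4$; hence $\mathcal S=\sT$. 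The delicate point in this direction is that $\eth^2$ is not elliptic and a full Bondi normalisation need not exist when $\cP\neq0$, so the invariance of $\dim_\CC\mathcal S$ under the gauge normalisation—equivalently, under deformation of the cut geometry—has to be argued rather than assumed, and that, rather than the harmonic analysis on the round sphere, is where I expect the real work to lie.
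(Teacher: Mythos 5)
Your proposal is correct and follows essentially the same route as the paper: apply $\ethc[2]$ to the eigenvector equation~\eqref{eq:38}, observe that the remainder $Z=\ethc[2]\eta-\B\eta$ again satisfies~\eqref{eq:38} once~\eqref{eq:39} absorbs the surviving $\ethc\cQ$ term, kill $Z$ by the appendix classification (no nontrivial eigenfunctions above spin-weight $\tfrac12$), and close the converse by the four-dimensional count of the kernel of $\ethc[2]-\B$. The one step you flag as genuinely delicate --- gauge normalisation for that dimension count --- is dispatched in the paper by the cylinder gauge (round metric on the cuts, no condition on $\tau$, hence available even when $\cP\neq0$), in which $\B=\eth\tau-\tau^2$ makes~\eqref{eq:40} collapse identically to $\eth^2U=0$.
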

\begin{proof}[Proof]
  Every $\eta\in\sT$ is a linear combination of eigenfunctions of the operator $\ad_{\xi_0}$ defined by the left hand side in~\eqref{eq:38}. It is enough to focus only on those. Thus, we assume that $\eta$ satisfies \eqref{eq:38} with $\lambda=\pm\frac12$. Applying $\eth_c^2$ to this equation and using the commutator relations between $\ethc$ and $\etbc$ yields
  \begin{equation}
    \label{eq:41}
      \xi_0\etbc\ethc[2]\eta - \frac12\ethc[2]\eta\etbc\xi_0 = \lambda_0 \ethc[2]\eta + \xi_0\ethc\cQ\eta.
  \end{equation}
  Since $\etbc A = 0$, the conformal density $U=A\eta$ satisfies the same equation. Defining $Z:=\ethc[2]U-\B U$ and inserting into~\eqref{eq:41} yields
  \[
      \xi_0\etbc Z - \frac12 Z\etbc\xi_0 = \lambda_0 Z + \xi_0\ethc\cQ U + \xi_0\etbc\B U.
  \]
Since $\B$ satisfies~\eqref{eq:39} this shows that $Z$ satisfies~\eqref{eq:38} as well. Referring to app.~\ref{sec:equat-co-curv} and noting that $Z$ has spin-weight $s=2$ implies that $Z=0$ identically. Thus, $\eta\mapsto A\eta$ defines a linear map from $\sT$ to the solution space of~\eqref{eq:40} which is injective. Since both spaces are 4-dimensional this establishes the result.
\end{proof}
Note, that in view of this result we will henceforth identify $\sT$ with the solution space of~\eqref{eq:40}, i.e., an asymptotic translation is a conformal density $U:[1;0,0]$ satisfying~\eqref{eq:40}.

The conformal density $\B$ introduced above is closely related to the ``gauge field'' $\rho_{ab}$ defined by Geroch~\cite{Geroch:1977}. As shown in app.~\ref{sec:equat-co-curv} it is uniquely determined by the equation $\etbc\B+\ethc\cQ=0$. In view of this equation and in lack of a better name we call the field $\B$ the \emph{co-curvature}. This equation also fixes the behaviour of the co-curvature along the null generators as follows from the Jacobi identity~\eqref{eq:32} and
\[
  \begin{multlined}
  0 = \thorpc(\etbc\B + \ethc\cQ) = \etbc\thorpc\B + 2 \cP \cQ + \ethc \thorpc\cQ\\ = \etbc\thorpc\B + 2 \cP \cQ - \ethc \etbc\cP = \etbc(\thorpc\B - \ethc\cP).
\end{multlined}
\]
Since $\etbc$ acting on densities with weight $[-2;1,-3]$ is an isomorphism (see app.~\ref{sec:equat-co-curv}) we find
\begin{equation}
  \label{eq:42}
  \thorpc\B = \ethc\cP.
\end{equation}

\section{The Lorentz metric on the space of translations}
\label{sec:lorentz-metric-space}

Our next problem is to define a metric on $\sT$. This is necessary for several reasons, the most important one being that the energy-momentum is commonly defined as a co-vector on $\sT$ and it is interpreted as a Lorentzian 4-vector. This implies that there is Lorentzian metric on $\sT^*$, the dual space to $\sT$ and hence also on $\sT$. Furthermore, it is important to normalise the translations in order to fix the magnitudes of the computed energies and momenta.

A discussion of this issue is difficult to find in the literature since in the Bondi gauge the first four spherical harmonics are implicitly taken as an orthonormal Lorentzian basis for $\sT$ with respect to which the energy-momentum vector can be computed. This is consistent with spinorial approaches to the Bondi energy-momentum based on the Nester-Witten form (see~\cite{Horowitz:1982,Szabados:2009}) where the structure of spin-space is used to define a Lorentzian metric on $\sT$. However, at least to our knowledge there is no explicit definition of a Lorentzian metric on $\sT$ defined as the solution space of~\eqref{eq:40}.

A hint about how to proceed in the general case can be found in a paper by Hansen, Janis et al~\cite{Hansen:1976} where a representation of a Minkowski vector at a point $P$ as a function on the light cone of $P$ is discussed. Given a vector $U^a$ at $P$, the authors associate a scalar $U = U^al_a$ to every null vector $l^a$ at $P$ with $t^al_a=1$, where $t^a$ is a future-pointing time-like unit-vector, the first member of an orthonormal basis at~$P$. This defines a function $U$ on the unit-sphere of null vectors $l^a$ at $P$ which automatically satisfies the equation $\eth^2U=0$. For any two vectors $U^a$ and $V^a$ at $P$ and their representative functions $U$ and $V$ on the sphere one can define\footnote{Note, that the conventions regarding $\eth$ in~\cite{Hansen:1976} are different from the ones used here.}
\begin{equation}
  G[U,V] = UV + U \eth\etp V  + V \eth\etp U - \eth U \etp V - \eth V \etp U\label{eq:43}
\end{equation}
and, remarkably, one finds that $G[U,V]=U_aV^a$, i.e., the expression $G[U,V]$ is constant on the sphere and evaluates to the inner product between the two vectors. 

Taking this expression as a starting point we consider now a more general sphere $\cS$ in Minkowski space and two functions $U$ and $V$ on it. We first assume the sphere to have a constant radius $R$. Then we find that for dimensional reasons we should consider the expression
\[
  G[U,V] = \frac1{R^2} UV + U \eth\etp V  + V \eth\etp U - \eth U \etp V - \eth V \etp U.
\]
Applying $\eth$ to this equation and using the commutator equation $[\eth,\etp]\alpha = -sR^{-2} \alpha$ for any quantity $\alpha$ with spin-weight $s$ on a sphere of radius $R$ we obtain
\[
  \eth G[U,V] =  - \eth^2 U\, \etp V - \eth^2 V\, \etp U,
\]
showing that $G[U,V]$ is constant on $\cS$ if and only if $U$ and $V$ satisfy the equation $\eth^2U=0=\eth^2V$.

Next, we consider the general case. Again, we need to make adjustments to the formula. This time, it is natural to replace the $R^{-2}$-term with the Gauß curvature of the sphere and in view of the conformal invariance we use the conformal density $\cQ$. Furthermore, we consider the quadratic form $q[U]:=G[U,U]$ since the bilinear form can be obtained by polarisation and we use the general conformally invariant derivatives. This puts our focus onto the quadratic expression (and its associated bilinear form $G[U,V]$)
\begin{equation}
\label{eq:44}
  q[U] = 2\cQ U^2 + 2U \ethc\etbc U - 2\ethc U \etbc U
\end{equation}
defined  in terms of a scalar $U$ on $\scri$. First, we note that this expression is a conformal density and if $U$ has weights $U:[1;0,0]$ then $q[U]$ is a scalar, i.e., all its weights vanish. Furthermore, a short calculation shows that
\[
  \ethc q[U] = 2\etbc\left( U \left[\ethc[2]U - \B U\right]\right).
\]
In a similar way we compute
\[
  \thorpc q[U] = 2G[\thorpc U,U]
\]
This shows that $q[U]$ is constant on a $s$-cut if and only if $\ethc[2]U = \B U$, since $\etbc \eta=0$ has only the trivial solution for $\eta$ with weights $[0;2,-2]$\jf{Mention this in the appendix.}. Since $q[U]$ is conformally invariant we can evaluate it in any gauge. Doing this in a Bondi system reduces $q[U]$ to $G[U,U]$ with $G$ as given in~\eqref{eq:43} on any Bondi cut. Since this is a non-degenerate bilinear form, we find that $\thorpc q[U] = 0$ if and only if $\thorpc U=0$. Thus, we have shown that $q[U]$ is constant on $\scri$ if and only if $U\in\sT$, so that $q[U]$ is a quadratic form with Lorentzian signature on $\sT$ which we take as the definition of the Lorentz metric on the space $\sT$ of asymptotic translations.

When rewriting the quadratic form with respect to the usual $\eth$ and $\etp$-operators we obtain
\begin{equation}
  q[U] = 2K U^2 + 2 U \eth\etp U - 2 \eth U \etp U.\label{eq:45}
\end{equation}
Now recall that $K=\frac12 k[g]$, where $k[g]$ is the Gauß curvature of the induced metric $g_{ab}$ on the cut through the point where $q[U]$ is evaluated and that the Gauß curvature transforms under a conformal rescaling $g_{ab} \mapsto \Theta^2 g_{ab}$\jf{Should one use a different symbol for the 2-metric?} as follows
\[
  k[\Theta^2g] = \Theta^{-2}\left(k[g] - 2\Theta\, \eth\etp \Theta + 2\eth\Theta\, \etp\Theta \right).
\]
Rewriting this in terms of $K$ we get the equation
\begin{equation}
  \label{eq:46}
    \Theta^2K[\Theta^2g] + \Theta\, \eth\etp \Theta - \eth\Theta\, \etp\Theta = K[g] .
\end{equation}
Suppose now that $U$ does not vanish anywhere, then we can replace $g$ with $U^{-2}g$ and, comparing with \eqref{eq:45}, we find
\begin{equation}
  q[U] = 2 (U^2K[g] + U\, \eth\etp U - \eth U\, \etp U) = 2 K[U^{-2}g].\label{eq:47}
\end{equation}
Therefore, we find that $q[U]=1$, i.e., $U$ is normalised, if and only if $U^{-2}g_{ab}$ has Gauß curvature equal to $1$, i.e., it is the metric on the unit-sphere.

\section{The mass aspect}
\label{sec:mass-aspect}

We are now ready to discuss the main ingredient in the definition of the Bondi energy-momentum, the mass aspect. We take as our starting point the definition given in \cite{Penrose:1986}
\[
  \fM = A^{-1}\left(\sigma \Phi_{20} - A \psi_2\right).
\]
As it stands, this expression is neither real nor conformally invariant. We address both issues next.

As mentioned in sec.~\ref{sec:asympt-flatn-struct} the imaginary part of \eqref{eq:19} is intrinsic to $\scri$ and it yields the following equation
\[
  -\etp\Phi_{01} + \eth\Phi_{10} = \sigmab \Phi_{02} - \sigma \Phi_{20} +A \psi_2 - A \psib_2,
\]
which after inserting \eqref{eq:9} becomes
\begin{equation}
  \label{eq:48}
  - A \psi_2 + \sigma \Phi_{20} + \etp^2\sigma =    - A \psib_2 + \sigmab \Phi_{02} + \eth^2\sigmab  .
\end{equation}
so that the addition of the $\etp^2\sigma$ term yields a real quantity.

However, this expression is still not conformally invariant. As it stands it is the result of evaluating a conformal density in a Bondi system which is defined by $\rho'=0$ and $\tau=0$. We can find out what the full expression must look like by transforming to a different gauge and observing which terms appear. This calculation suggests to define a conformal density $N$ of weights $[-2;-2,2]$ by
\begin{equation}
  \label{eq:49}
  N = \Phi_{20} - \rho' \sigmab - \etp\taub + \taub^2.
\end{equation}
We will see below that it is closely related to Bondi's \emph{news function}. Now we can define the mass-aspect $\fM$ as a conformal density $\fM:[-3;0,0]$ by the equation 
\begin{equation}
\label{eq:50}
  A \fM =  - A\psi_2 + \sigma N + \etbc[2]\sigma.
\end{equation}
Note the appearance of the conformal $\etbc$ operator here. The advantage of writing $\fM$ in this way is that it becomes manifestly conformally invariant since each term is a conformal density. However, when rewritten in terms of the usual $\etp$-operator then the mass-aspect becomes almost the familiar expression, namely
\[
  \fM =  A^{-1}\left(- A\psi_2 + \sigma \Phi_{20} + \etp^2\sigma - \rho' \sigma\sigmab \right).
\]
The next step is to compute the derivative of $\fM$ along the generators. To do this we need to find the $\thorpc$-derivatives of the quantities defining the mass-aspect. The easy ones are
\[
  \thorpc A = 0,\qquad \thorpc \sigma  = -\Nb.
\]
Next we compute the derivatives of $N$. After a calculation using \eqref{eq:23} we find the result
\begin{equation}
  \label{eq:51}
  \thorpc N = A \psi_4 - \etbc \Pb.
\end{equation}
Similarly, using the difference of \eqref{eq:20} and \eqref{eq:21} we find that
\begin{equation}
  \label{eq:52}
  \ethc N = A \psi_3 + \etbc Q.
\end{equation}
The derivative of $\fM$ along the generators can be computed in a straightforward way using the intrinsic equations that we have just established. We find the intriguing formula
\begin{equation}
  \label{eq:53}
   A \thorpc \fM =  - N\Nb  - \ethc[2]N   - \etbc[2]\Nb + \ethc\etbc Q .
\end{equation}

\section{The mass-loss formula}
\label{sec:mass-loss-formula}

At this point we have everything in place to discuss the Bondi-Sachs mass-loss formula.
% We first define the Bondi energy-momentum on some cut $\cC$ of $\scri$. Let $U\in\sT$ be an asymptotic translation then the energy-momentum component defined by $U$ is
We first define the following integral for a given cut $\cC$ of $\scri$ and an arbitrary conformal density $U$ with weights $[1;0,0]$
\begin{equation}
  \label{eq:54}
4\pi   m_\cC[U] = \int_{\cC}U\left(\sigma N + \etbc[2]\sigma - A \psi_2 \right)  A^{-1}\, \dd^2\cS.
\end{equation}
Here, $\dd^2\cS=\i\mbbar\mbb$ is the area 2-form on $\cC$ induced from the unphysical metric $g_{ab}$. Note, that with $U$ of the given type  the integral is well-defined in the sense that the integrand is a conformal density with vanishing weights so that the value is unambiguously defined.

The main achievement of the analysis by Bondi and coworkers\cite{Bondi:1960,Bondi:1962,Sachs:1962a} in the early 1960 is the formulation of the mass-loss formula which has been given in many different formulations since then. We will provide here another point of view on the mass-loss which highlights the tight interplay between the asymptotic translations and the properties of the mass-aspect. Our presentation follows closely the one given in~\cite{Penrose:1986} based on Stokes' theorem as formulated in app.~\ref{sec:stokes-theorem-cghp}.

Consider the 2-form $\mub = \mu_0\, \i\mbbar\mbb + \mu_1 \i \,\lbb\mbb - \mu_2\,\i \lbb\mbbar$ on $\scri$ defined by
\[
  \begin{aligned}
  \mu_0 &= A^{-1} U\left(\sigma N + \etbc[2]\sigma - A \psi_2 \right),\\
  \mu_1 = \bar{\mu}_2 &= A^{-1}\left(U\ethc (N + \tfrac12 \Bb) - \ethc U (N + \tfrac12 \Bb)\right) 
\end{aligned}
\]
then $\mub$ pulled back to any cut $\cC$ is the Bondi energy-momentum relative to $U$ evaluated on that cut. Let $\cC_1$ and $\cC_2$ be any two cuts from the family defined by constant $s$ and integrating $\dd\mub$ over the 3-dimensional piece $\scri_1^2$ between them yields via Stokes' theorem
\[
  m_2[U] - m_1[U] = \frac1{4\pi}\int_{\scri_1^2} \dd\mub .
\]
Using~\eqref{eq:72} we find $\dd\mub$ by computing
\[
  \begin{multlined}
    A\left(\thorpc\mu_0 + \ethc\mu_1 + \etbc\mu_2\right) = \thorpc U A \fM - U N\Nb - U \ethc[2]N - U\etbc[2]\Nb + U\ethc\etbc\cQ \\
    + U \ethc[2](N+\tfrac12\Bb) - \ethc[2]U\, (N + \tfrac12 \Bb)
    + U \etbc[2](\Nb+\tfrac12\B) - \etbc[2]U\, (\Nb + \tfrac12 \B) \\
    =   - U N\Nb  + U(\ethc\etbc\cQ + \tfrac12 \ethc[2]\Bb + \tfrac12 \etbc[2]\B) - U\B \, (N + \tfrac12 \Bb) - U\Bb \, (\Nb + \tfrac12 \B)\\
    - (\ethc[2]U - \B U)\, (N + \tfrac12 \Bb)
    - (\etbc[2]U - \Bb U)\, (\Nb + \tfrac12 \B) + \thorpc U A \fM\\
    =   - U (N\Nb  + \B \, N  + \Bb \,\Nb + \B\Bb)\\
    - (\ethc[2]U - \B U)\, (N + \tfrac12 \Bb)
    - (\etbc[2]U - \Bb U)\, (\Nb + \tfrac12 \B) + \thorpc U A \fM
\end{multlined}
\]
The form of the first term in the final expression suggests to consider the quantity $\news = N+\Bb$. Inspection of \eqref{eq:42}, \eqref{eq:51} and \eqref{eq:52} shows that $\news$ satisfies
\begin{equation}
  \thorpc\news = A \psi_4, \qquad \ethc\news = A \psi_3,\label{eq:55}
\end{equation}
so that $\news$ can be considered as a potential for the part of the Weyl tensor that is entirely intrinsic to $\scri$. Note, that these two simultaneous equations for $\news$ can be solved because their integrability condition is the Bianchi equation~\eqref{eq:33} for $k=3$. We will take this as the defining property for the \emph{Bondi news} $\news$.

Now the calculation above proves the following
\begin{thm}[Bondi-Sachs mass-loss formula]
  Let $U\in \sT$ be an arbitrary translation and let $\cC_1$ and $\cC_2$ be two arbitrary $s$-cuts, then
  \[
    m_2[U] - m_1[U] = -\frac1{4\pi}\int_{\scri_1^2} U \,\news\bar{\news}\, \dd^3V.
  \]
  Furthermore, if $U$ is a time translation (so that $U>0$) then\jf{Maybe talk about that in the appendix}
  \[
    m_2[U] < m_1[U] .
  \]
\end{thm}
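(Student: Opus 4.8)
The plan is to extract the result directly from the divergence identity for $\mub$ established immediately above, specialised to the case $U\in\sT$, and then to convert it into an integral statement by the conformal Stokes' theorem of app.~\ref{sec:stokes-theorem-cghp}. The starting point is the expression already computed for $A\left(\thorpc\mu_0 + \ethc\mu_1 + \etbc\mu_2\right)$, whose final line displays three groups of terms: one proportional to $-U\left(N\Nb + \B N + \Bb\Nb + \B\Bb\right)$, two carrying the factors $\ethc[2]U - \B U$ and $\etbc[2]U - \Bb U$, and one carrying $\thorpc U$.

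First I would impose the defining properties of an asymptotic translation. By the characterisation theorem, $U\in\sT$ means $\ethc[2]U = \B U$, and since $U:[1;0,0]$ is real the complex conjugate gives $\etbc[2]U = \Bb U$ as well. Moreover $\eta = AU$ is then a super-translation, so $\thorpc\eta = 0$, and because $\thorpc A = 0$ this forces $\thorpc U = 0$. Substituting these three identities annihilates the last three groups of terms, leaving simply $A\left(\thorpc\mu_0 + \ethc\mu_1 + \etbc\mu_2\right) = -U\left(N\Nb + \B N + \Bb\Nb + \B\Bb\right)$. The decisive algebraic step is then to recognise, using $\news = N + \Bb$, that this quadratic combination factorises as $(N+\Bb)(\Nb+\B) = \news\bar\news$; this is the point at which the news function appears organically.

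With the integrand identified as $-U\,\news\bar\news$, I would invoke Stokes' theorem in the form~\eqref{eq:72}, applied to the strip $\scri_1^2$ bounded by the two $s$-cuts, so that $m_2[U]-m_1[U] = \tfrac1{4\pi}\int_{\scri_1^2}\dd\mub = -\tfrac1{4\pi}\int_{\scri_1^2}U\,\news\bar\news\,\dd^3V$; the powers of $A$ and the induced volume element combine exactly as arranged in the appendix. This settles the first equation. For the monotonicity, the integrand equals $U\lvert\news\rvert^2$, which is non-negative as soon as $U>0$, whence $m_2[U]-m_1[U]\le 0$, with equality precisely when $\news\equiv 0$ on $\scri_1^2$ and hence strict inequality whenever genuine radiation crosses the intervening region.

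The one step that does not reduce to the calculation above, and which I expect to be the main obstacle, is to justify that a \emph{time} translation is represented by a density with $U>0$ everywhere. Since $\sT$ was endowed in sec.~\ref{sec:lorentz-metric-space} only with the abstract Lorentzian form $q[U]$ and possesses no canonical basis, I would first have to say precisely what ``time translation'' means, namely a future-directed timelike element relative to $q$, and then show that such an element has constant positive sign. The cleanest route is to evaluate in a Bondi system, where $\B=0$ and~\eqref{eq:40} collapses to $\eth^2 U = 0$, so that $U$ is a combination of the $\ell=0$ and $\ell=1$ spherical harmonics; a future-directed timelike choice has a dominant positive monopole part and is therefore strictly positive on the whole sphere of null directions, and the conformal invariance of the sign of $U$ then carries this conclusion to an arbitrary cut. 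Pinning down the orientation convention and the invariance of the statement ``$U>0$'' is the delicate bookkeeping that, as the authors signal, belongs in the appendix.
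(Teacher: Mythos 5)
Your proposal is correct and follows essentially the same route as the paper: the authors likewise specialise the already-computed divergence $A(\thorpc\mu_0+\ethc\mu_1+\etbc\mu_2)$ to $U\in\sT$ so that the $(\ethc[2]U-\B U)$, $(\etbc[2]U-\Bb U)$ and $\thorpc U$ terms drop, recognise the surviving quadratic as $\news\bar\news$ with $\news=N+\Bb$, and apply the cGHP Stokes' theorem~\eqref{eq:72}. Your added care about what ``time translation'' and $U>0$ mean (and the observation that equality holds when $\news\equiv0$) goes slightly beyond what the paper writes, which simply asserts $U>0$ and defers the justification, so it is a welcome refinement rather than a divergence.
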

Note, that the mass-loss formula holds only if $U$ is an asymptotic translation, and therefore, only in this case are we justified to interpret~\eqref{eq:54} as a component of an energy-momentum. Furthermore, note that $m_\cC[U]$  is linear in $U$ and, therefore, it scales with the 'size' of $U$. Not only for this reason it is important to have a Lorentz metric defined on $\sT$ which allows us to normalise $U$. It also allows us to distinguish temporal from spatial translations. Thus, when $U$ is time-like \eqref{eq:54} defines an energy, while for space-like $U$ we get a momentum component.

Finally, we come back to the mass-aspect. We defined it in \eqref{eq:50} without any reference to asymptotic translations based entirely upon the requirements of reality and conformal invariance. However, in view of the fact that it is integrated against an asymptotic translation we can now also redefine the mass-aspect due to the following calculation
\[
  \begin{multlined}
  \int_\cC U \left(\sigma N + \etbc[2]\sigma - A\psi_2\right)\,A^{-1}\dd^2\cS = 
  \int_\cC U \left(\sigma N  - A\psi_2\right)\,A^{-1} + \sigma\etbc[2]UA^{-1}\dd^2\cS \\
  = \int_\cC U \left(\sigma N  - A\psi_2\right)\,A^{-1} + \sigma\Bb UA^{-1}\dd^2\cS = 
  \int_\cC U \left(\sigma \news  - A\psi_2\right)\,A^{-1}\dd^2\cS .
\end{multlined}
\]
Thus, we could also adopt the definition
\begin{equation}
  \label{eq:56}
  \fM = A^{-1}\left(\sigma \news  - A\psi_2\right)
\end{equation}
for the mass-aspect. This definition looks very similar to the one given at the beginning of sec.~\ref{sec:mass-aspect} except that $N=\Phi_{20}$ is replaced by the news $\news$. Note, that this definition is conformally invariant and, when integrated against an asymptotic translation yields real values. However, the definitions \eqref{eq:50} and \eqref{eq:56} are not identical as conformal densities on $\scri$. They yield the same values only when integrated against an asymptotic translation, i.e., they define the same energy-momentum covector regarded as a linear form on $\sT$.

\section{Computing the Bondi energy-momentum}
\label{sec:comp-bondi-energy}

In this final section we want to briefly discuss how one would go about explicitly calculating the Bondi energy and momenta when faced with a representation of $\scri$ which is not a Bondi system. Let us assume that we are given 3-dimensional null hypersurface of a space-time given in the form of a family of 2-dimensional cuts labeled by a scalar $s$. We are also given a null tetrad which we can assume to be adapted to the cuts in the sense that $n^a$ points along the null generators and $m^a$ is tangent to the cuts. This can always be achieved by at most two null rotations. On each cut we have all the spin-coefficients and curvature quantities available, all given with respect to some generic fixed coordinate system.

This situation occurs in numerical relativity, when one solves equations which allow full access to null infinity but which deny the possibility to introduce gauges adapted to $\scri$ because of some numerically relevant considerations. An example which was the motivation for this work is described in~\cite{Beyer:2017}.

In order to compute an energy-momentum component on a given cut $\cC$ one needs two main ingredients, the mass-aspect and an asymptotic translation $U$. The latter is more involved to compute so we describe this first. We need to select an appropriate solution of the equation~\eqref{eq:39}. The way we proceed is to first compute a conformal factor $\Omega$ which scales the unit-sphere metric to the metric on $\cC$. It must satisfy the equation
\begin{equation}
  \label{eq:57}
  \Omega^2 K + \Omega\, \eth\etp \Omega - \eth \Omega\, \etp \Omega = 1 .
\end{equation}
where the $\eth$-operator is defined in terms of the given null tetrad and the corresponding spin-coefficients and with $K$ also being computed from the given data on the cut. This is a non-linear elliptic equation for $\Omega$ which has a unique solution.

The conformal factor $\Omega$ defines a conformal density with weights $[1;0,0]$ so that we can write every asymptotic translation in the form $U=\Omega V$ with a scalar function $V$. Furthermore, from~\eqref{eq:47} we have $q[\Omega] = 1$ so that $\Omega$ satisfies~\eqref{eq:39}. Inserting $U$ into~\eqref{eq:39} we obtain an equation for $V$
\[
2\ethc\Omega\ethc V + \Omega\ethc[2]V =  0.
\]
Rewriting this equation in terms of the usual $\eth$-operator on $\cC$ yields
\begin{equation}
  \label{eq:58}
  \eth\left(\Omega^2\eth V\right) = 0.
\end{equation}
This equation has four linearly independent solutions, one being $V=1$, corresponding to $\Omega$ being a solution of~\eqref{eq:39}. It defines a time-like translation. The other three solutions can be chosen to correspond to translations in three mutually orthogonal space-like directions. To this end one needs the scalar product on $\cT$ in order to produce an orthonormal basis for the solution space.

There is one practical aspect related to the use of the quadratic form $q[U]$. As  defined in \eqref{eq:44} or \eqref{eq:45} it is a constant \emph{function} on $\scri$. For practical purposes it is much better to compute a single number. This is most easily done by integrating the function over the full cut. Let $|\cC|$ denote the area of the cut then, since $q[U]$ is constant for all asymptotic translations, we have
\[
  q[U] = \frac2{|\cC|} \int_{\cC} K U^2 - 2 \eth U\, \etp U\; \dd^2\cS
\]
after an integration by parts. Note that, in a Bondi frame where $K=\frac12$ and for the time translation given by $U=1$ this formula yields $q[U]=1$ due to the Gauß-Bonnet theorem.

Once the translations have been selected one needs to compute the momenta by integrating them against the mass-aspect. In principle, this can be done by using any of the two definitions of the mass-aspect but both of those have disadvantages from a practical point of view: the definition in \eqref{eq:56} involves $\B$ which must be obtained by solving an equation while the definition in \eqref{eq:50} involves a second order derivative. To avoid both issues one can also use the integral
\[
  m_\cC[U] = \int_{\cC} U N (A^{-1}\sigma)  - U \psi_2  - \etbc U \etbc (A^{-1}\sigma) \, \dd^2\cS,
\]
which is obtained from \eqref{eq:50} by integrating by parts.

\section{Discussion}
\label{sec:conclusion}

In this work we have presented a treatment of null infinity and the Bondi energy-momentum which is manifestly conformally invariant and makes no reference to any special gauge. Most discussions of the subject jump very quickly to the introduction of a simplifying gauge (which always exists). The only exception seems to be Geroch's discussion~\cite{Geroch:1977} of null infinity (see also \cite{Tafel:2000} for related questions). For us, the motivation to revisit the issue of the Bondi energy-momentum and the interplay with the asymptotic translations arose from the need to compute the Bondi mass under circumstances which are far away from any simplifying gauges. In a related paper~\cite{Frauendiener:2021b} we present our numerical implementation of the present theoretical discussion in an application where we study the response of a black hole to the impact of gravitational waves.

We hope that our treatment of null infinity offers some new insights into the subject. In particular, the characterisation of the Bondi news ``function'' as a potential for the intrinsic part of the Weyl tensor is very simple and gauge independent. In our treatment of the topic we were guided by the notion of conformal invariance. It might be useful to spell out here what is meant by this term. To be sure, the notion of the Bondi-Sachs energy-momentum is a property of the physical space-time and, therefore, only defined for the physical metric. It is not conformally invariant in the sense of giving the same values when the metric is rescaled by an arbitrary conformal factor. Instead, we represent the physical metric $\tilde g_{ab}$ in terms of a conformal factor $\Omega$ and a metric $g_{ab}$ (see~\eqref{eq:1}) as $\tilde g_{ab}=\Omega^{-2}g_{ab}$. Clearly, all physical properties attributed to the metric $\tilde g_{ab}$ should be independent of the specific choice of $\Omega$ and $g_{ab}$ made to represent it, i.e., they should be invariant if $(g_{ab},\Omega)$ is replaced by $(\theta^2g_{ab},\theta\Omega)$ for arbitrary non-vanishing functions $\theta$. 

The Lorentzian metric on the space of asymptotic translations seems to have been unknown until now, at least in its explicit form as given here. It is present in all treatments of $\scri$ albeit mostly implicitly, being defined in terms of a basis for the translations given by the first four spherical harmonics.

The appearance of the co-curvature $\B$ is crucial for the consistency of the entire structure on $\scri$. On the one hand, its presence should not surprise us since Geroch had already introduced the corresponding tensorial quantity. However, on the other hand, its importance may have been underestimated until now. It appears in our treatment essentially as an auxiliary field that mediates between $\ethc$ and $\etbc$ equations on $\scri$ but clearly it is intimately tied in with the geometric properties of a 2-surface and its embedding into a null hypersurface. It would not be surprising if it turned out that the co-curvature also plays a major role in a satisfactory definition of quasi-local energy-momentum \cite{Szabados:2009,Penrose:1982,Penrose:1984}. This remains to be seen.

Finally, let us comment on the cGHP formalism. In our discussion we have made the assumption that the null tetrad on $\scri$ is adapted to some family of cuts of $\scri$. The reason for this assumption was merely to simplify the equations because it implies that $\rho$ is real; it is by no means necessary. In fact, there might be an advantage when the only condition imposed on the frame is that $n^a$ be aligned with the null generators of $\scri$ and leaving the freedom of a null rotation around that vector. Then the null congruence generated by the transverse null vector $l^a$ is no longer surface forming and $\rho$ acquires an imaginary part. This is the setup that is necessary to discuss the remarkable results by Newman and coworkers~\cite{Adamo:2012} about equations of motion of (charged) particles in a relativistic field being obtained from the asymptotic structure of null infinity alone. It is hoped that the comparative simplicity of the cGHP formalism (at least on $\scri$) may help to shed some new light onto these developments which are hidden behind some rather complicated calculations.

\appendix

\section{Explicit calculations}
\label{sec:expl-calc}

We need to study equations on $\scri$ involving the cGHP operators $\ethc$, $\etbc$ and $\thorpc$ acting on conformal densities. The equations are manifestly conformally invariant, so that they can be evaluated in any gauge. The most useful gauges are, as mentioned in sec.~\ref{sec:cut-systems-null} the cylinder gauge and the Bondi gauge. In both gauges the convergence $\rho'$ is chosen to vanish, resulting in the same metric being defined on any cut, which is taken to be the unit-sphere metric. The Bondi gauge imposes the additional condition $\tau=0$ which has the consequence that the cuts ``move'' uniformly into the future with increasing $s$.

For the explicit calculations below we choose the cylinder gauge and, in addition, on the unit-sphere we choose stereographic coordinates $(z,\zb)$ so that the metric reads\jf{2-metric?}
\[
  g = - \frac{2 \dd z\, \dd\zb}{P^2} ,\qquad \text{ where } P = \frac1{\sqrt2}(1+z\zb).
\]
This yields the standard expressions for the $\eth$ and $\etp$ operators as found in~\cite{Penrose:1984a} acting on a quantity $\eta$  with weights $[w;s,-s]$
\begin{equation}
  \label{eq:59}
  \eth \eta = P^{1-s}\del_\zb(P^s \eta), \qquad 
  \etp \eta = P^{1+s}\del_z(P^{-s} \eta).
\end{equation}

The general strategy is to reformulate the conformally invariant equations so that they can be written in terms of these operators. This can be most easily achieved by using the frame factor $A$ and the null lapse $\alpha$. Both are annihilated by $\ethc$ and $\etbc$ on a cut. This means that they can be used to alter the weights of the quantities on which these operators act. For example, for $\eta:[w;p,q]$ the density $\tilde\eta = A^{-\frac12(p+q)} \alpha^{-(w+q)}$ has weights $[s;s,-s]$ with $s=\frac12(p-q)$ so that
\[
  \ethc \eta = 0 \iff \eth \tilde\eta = 0.
\]
In this way we try to rephrase the conformally invariant equations into equations involving the standard $\eth$ operator and then study those. In the following sections we consider most of the equations that were mentioned in the main text.

A major role is played by the requirement that the quantities in question must be globally defined on the sphere. Without going into details (see~\cite{Eastwood:1982}) we just remind  the reader that a spin-weight $s$ quantity represented by a globally regular function $\eta(z,\zb)$ in terms of stereographic coordinates $(z,\zb)$ (with respect to the north pole, say) is globally defined on the sphere if and only if its representation in stereographic coordinates with respect to the south pole is globally regular as well, i.e., if and only if
\[
\tilde\eta (v,\vb) =  \eta(v^{-1},\vb^{-1})v^s\vb^{-s} 
\]
is regular for all $(v,\vb)$.

\subsection{Asymptotic symmetries}
\label{sec:asympt-symm-1}

We start with the equations~\eqref{eq:36} determining the BMS generators. The structure of these equations and the commutator~\eqref{eq:37} suggests to consider first complex valued densities $\eta$ subject to the equation $\thorpc\eta = \frac12\etbc\xi$ and complex linear combinations of such pairs $(\xi,\eta)$. This will yield the complexification of the BMS algebra. The real version can then be obtained by regarding $(\xi,\eta_\xi)$ and $(\i\xi,\i\eta_\xi)$ as linearly independent over the reals. The determination of the structure of the BMS algebra can be achieved in a straightforward way in a Bondi system and that is enough to understand the algebraic properties. Our goal here, however, is to relate the BMS generators, in particular the characterization of the asymptotic translations, directly to the properties of the mass-aspect which are expressed in terms of the cGHP formalism. 

With $\xi$ having weights $[1;1,-1]$ and since $\rho'=0$ in the cylinder gauge, the equation $\thorpc \xi=0$ implies that $D'\xi=0$, i.e., that $\xi$ is constant along the null generators of $\scri$. Furthermore, the equation $\ethc\xi=0$ translates directly into the equation
\[
  \del_\zb(P \xi) = 0  \implies \xi(z,\zb) = \frac{h(z)}{1+z\zb} 
\]
where $h(z)$ is an entire function on $\CC$. Since $\xi$ is globally defined on the sphere and has spin-weight $s=1$ the function
\[
  \xi(v^{-1},\bar{v}^{-1}) v \vb^{-1} = \frac{h(v^{-1})}{1+v\vb}v^2.
\]
must be regular for all $v\in\CC$ as well. Therefore, $h$ can only be a complex polynomial of degree at most 2, and this implies that $\xi$ can be regarded as a linear combination of three functions $\xi_0,\xi_\pm$ defined by
\begin{equation}
  \label{eq:60}
  \xi_- = \frac{1}{P}, \quad  \xi_0 = \frac{z}{P}, \quad \xi_+ = \frac{z^2}{P}.
\end{equation}

As discussed previously, the component $\eta$ along the null generators of $\scri$ determined by these three different $\xi$'s is obtained from the equation $\thorpc\eta = \frac12\etbc\xi$, which is translated into the equation
\[
  D'(A^{-1}\eta) = \frac1{2A}\etp\xi + A^{-1}\xi \taub.
\]
Using $D' = n^a\nabla_a = A^{-1}N^a\nabla_a = A^{-1}\del_u = A^{-1}\alpha^{-1}\del_s$ gives
\[
  \del_s(A^{-1}\eta) = \frac{\alpha}{2}\etp\xi - \xi \etp\alpha.
\]
This equation can be integrated by observing that $\int_0^s\alpha(t,z,\zb)\,\dd t$ defines the function $u(s,z,\zb)$, the amount of Bondi time $u$ passed along the null generator given by $(z,\zb)$. Since $\xi$ and $\etp\xi$ are constant along the null generators this yields the general solution with $\eta_0(z,\zb)$ an arbitrary function on the sphere
\[
  \eta(s,z,\zb) = A\eta_0(z,\zb)+\frac{A}{2}\etp\xi\,u(s,z,\zb) - A\xi \etp u(s,z,\zb).
\]
Thus, the general form of a (complex) BMS generator written as a vector field on $\scri$ is (remembering that $An^a = N^a$)
\begin{equation}
  X^a = \xi \mb^a + \left(\frac{u}{2}\etp\xi - \xi \etp u\right)\, N^a + \eta_0 \,N^a,\label{eq:61}
\end{equation}
where $\xi$ is a complex linear combination of the basis given in~\eqref{eq:60}.

For completeness we list a basis for the BMS algebra in a Bondi system, where $s=u$, and in stereographic coordinates
\begin{equation}
  \label{eq:62}
  \begin{aligned}
  X_- &= \del_z  - \frac{\zb}{1+z\zb} u\del_u ,\\
  X_0 &= z\del_z + \frac{1-z\zb}{2(1+z\zb)} u\del_u,\\
  X_+ &= z^2\del_z + \frac{z}{1+z\zb} u\del_u,
\end{aligned}
\qquad
X_\eta = \eta(z,\zb) \del_u,
\end{equation}
where now $\eta$ is an arbitrary function on the sphere. The commutators are
\begin{equation}
\label{eq:63}
\begin{aligned}
[X_0,X_-] &= - X_-,\\
[X_+,X_-] &= \;2 X_0,\\
[X_0,X_+] &= \phantom{-} X_+,
\end{aligned}
\qquad
\begin{aligned}
[X_-,X_\eta] &= 0,\\
[X_0,X_\eta] &= 0,\\
[X_+,X_\eta] &= 0,
\end{aligned}
\qquad
[X_\eta,X_{\eta'}] = 0.
\end{equation}
The commutators among $(X_-,X_0,X_+)$ are those for $\mathfrak{sl}(2,\CC)$, the complexification of $\mathfrak{so}(1,3)$, while the super-translations $X_\eta$ commute with everything.

\subsection{The translations}
\label{sec:translations}

Fix $\xi:[1;1,-1]$ with $\ethc\xi = 0$ and consider the equation
\begin{equation}
  \xi \etbc \eta - \frac12 \eta \etbc \xi = \lambda \eta\label{eq:64}
\end{equation}
for an arbitrary complex-valued conformal density $\eta:[w;p,q]$ and $\lambda\in\CC$. This equation can be rewritten in the form
\[
  \etbc(\xi \eta^{-2}) = -2\lambda \eta^{-2}.
\]
Multiplying this equation with appropriate powers of $A$ and $\alpha$ we find
\[
  \etp(\xi f)  = -2\lambda f, \qquad \text{ with } f = \eta^{-2}\alpha^{-2(1-w-p)}A^{p+q}.
\]
This is an eigenvalue equation for the conformal density $f$ which has weights $[-t;t,-t]$ where $t=1-p+q$. In the cylinder gauge this equation becomes
\[
  P^{1+t}\del_z\left(P^{-t} \xi f\right) = -2\lambda f
\]
and with the explicit expressions for $\xi$ from~\eqref{eq:60} we get the final equation for the function $g=P^{-(1+t)}f$
\begin{equation}
  \del_z (z^m g) = -2\lambda g, \qquad m=0,1,2.\label{eq:65}
\end{equation}
We solve this equation for each $m$ separately, starting with $m=1$. In that case the general solution is
\[
  g(z,\zb) = h(\zb) z^{-2\lambda-1},
\]
with some arbitrary function $h(\zb)$. We search for solutions $\eta$ which are globally defined functions with spin-weight $s=\frac12(p-q)$ on the sphere. Since both $\alpha$ and $A$ are globally defined, this implies that the function $\tilde\eta = \eta\alpha^{(1-w-p)}A^{-(p+q)/2}$ for which
\[
  \tilde\eta(z,\zb) = h(\zb) z^{\frac12+\lambda}P^{s-1},
\]
(where we have redefined the arbitrary function $h(\zb)$) must be globally defined with spin-weight $s$ as well. This implies that $\frac12+\lambda$ is a non-negative integer and that $h$ is an anti-holomorphic function, i.e., it is analytic in $\zb$ for all values of $\zb$. Furthermore, since $\tilde\eta$ has spin-weight $s$ the quantity
\[
  \tilde\eta(v^{-1},\vb{}^{-1}) v^s \vb{}^{-s}
\]
must be globally defined for all $(v,\vb)$, as well. This implies that
\[
  h(\vb{}^{-1})(1+v\vb)^{s-1} v^{\frac12-\lambda} \vb{}^{1-2s}
\]
is globally defined. This is only possible if $\vb{}^{1-2s}h(\vb{}^{-1})$ is globally regular, i.e., if $h$ is a polynomial with degree $n$ where $0\le n \le 1-2s$, and if, furthermore, $\lambda=\pm\frac12$. Thus, for $s>\frac12$ there is no non-trivial solution while for $s\le\frac12$ we find  the eigenvalues $\lambda=\pm\frac12$ with the corresponding eigenfunctions
\[
  \lambda=-\frac12: \quad  \tilde\eta = P^{s-1} \sum_{k=0}^{1-2s}a_k \zb^k;
  \qquad
  \lambda= \frac12: \quad  \tilde\eta = z P^{s-1} \sum_{k=0}^{1-2s}a_k \zb^k.
\]
The case $m=0$ is treated in a very similar way. In that case we find that
\[
   \tilde\eta(z,\zb) = h(\zb) \e^{\lambda z}P^{s-1}
\]
must define a global spin-weight $s$ quantity on the sphere. This leads to $\lambda=0$ and to the same regularity condition for $h$ as above, so that solutions of~\eqref{eq:65} with $m=0$ exist only for $\lambda=0$ and $s<\frac12$, implying
\[
\tilde\eta =   P^{s-1} \sum_{k=0}^{1-2s}a_k \zb^k.
\]
Analogously, for the case $m=2$ we arrive at a similar conclusion. Solutions exist only for $\lambda=0$ and $s<\frac12$, implying
\[
  \tilde\eta =   z P^{s-1} \sum_{k=0}^{1-2s}a_k \zb^k.
\]
This proves the following
\begin{thm}
  Let $\eta$ have weights $w;p,q]$ with $2s=p-q\le1$ then~\eqref{eq:64} has solutions only for $\lambda=\pm\frac12$. For each $\lambda$ the eigenspace has dimension $2-2s$ over $\CC$. The solutions are given in terms of a complex polynomials $h(\zb)$ of degree at most $1-2s$ in the form
  \[
    \eta = \alpha^{(w+p-1)}A^{(p+q)/2} h(\zb)P^{s-1} \, z^{\lambda+\tfrac12}.
  \]
\end{thm}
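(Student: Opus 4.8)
The plan is to reduce the eigenvalue equation~\eqref{eq:64} to a single first-order scalar equation whose solutions can be written down explicitly in stereographic coordinates, and then to read off both the admissible eigenvalues and the dimension of each eigenspace from the requirement that the solutions descend to globally regular fields on the sphere.

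The decisive first step is to recognise that the left-hand side of~\eqref{eq:64} is essentially an exact $\etbc$-derivative. Since the cGHP operators act as derivations on products (the weights being additive), dividing through by $\eta^2$ recasts~\eqref{eq:64} in the equivalent form
\[
  \etbc\!\left(\xi\,\eta^{-2}\right) = -2\lambda\,\eta^{-2}.
\]
This is the key simplification: the apparently nonlinear combination collapses into a single linear derivative acting on the conformal density $\xi\eta^{-2}$, turning the problem into an elementary one.

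Next I would trivialise the conformal and GHP weights. Because both the frame factor $A$ and the null lapse $\alpha$ are annihilated by $\etbc$ on a cut, multiplying by suitable powers of $A$ and $\alpha$ converts $\etbc$ into the ordinary GHP operator $\etp$ acting on a quantity $f$ of weights $[-t;t,-t]$ with $t=1-p+q$, giving $\etp(\xi f)=-2\lambda f$. Specialising to the cylinder gauge, inserting the standard expression~\eqref{eq:59} for $\etp$, and substituting the explicit generators $\xi=z^m/P$ from~\eqref{eq:60}, the equation reduces after the substitution $g=P^{-(1+t)}f$ to the scalar equation
\[
  \del_z\!\left(z^m g\right) = -2\lambda\,g, \qquad m=0,1,2.
\]
For each fixed $m$ this integrates at once: the $z$-dependence is a power $z^{-2\lambda-1}$ when $m=1$ and a non-power (exponential-type) factor when $m=0,2$, multiplied in every case by an arbitrary anti-holomorphic function $h(\zb)$.

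The real work — and the step I expect to be the main obstacle — is the global regularity analysis on $S^2$. A solution in the north-pole chart is not enough; I must impose that $\eta$ represents a genuine globally defined spin-weight-$s$ field, i.e.\ that its south-pole stereographic representative, obtained through the transformation recalled at the end of sec.~\ref{sec:expl-calc}, is regular as well. Tracking the power of $z$, the factor $P^{s-1}$, and $h(\zb)$ simultaneously under $z\mapsto v^{-1}$ at both poles is what quantises the problem: for the boost generator ($m=1$) it forces $\lambda=\pm\tfrac12$ and constrains $h$ to be a polynomial in $\zb$ of degree at most $1-2s$, so that nontrivial solutions exist precisely when $s\le\tfrac12$, whereas for $m=0,2$ the spectrum collapses to $\lambda=0$. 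A polynomial of degree at most $1-2s$ carries $2-2s$ complex coefficients, which yields the claimed dimension for each admissible $\lambda$. Finally I would undo the weight redefinitions, restoring the factors $\alpha^{(w+p-1)}A^{(p+q)/2}$ and the power $z^{\lambda+\frac12}$, to arrive at the stated closed form for $\eta$.
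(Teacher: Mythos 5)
Your proposal follows essentially the same route as the paper's own argument: the rewriting of~\eqref{eq:64} as $\etbc(\xi\eta^{-2})=-2\lambda\eta^{-2}$, the weight-trivialisation via powers of $A$ and $\alpha$, the reduction in the cylinder gauge to $\del_z(z^mg)=-2\lambda g$, and the two-chart regularity analysis that quantises $\lambda$ and bounds the degree of $h(\zb)$ are all exactly the steps the paper takes. The proposal is correct and there is nothing substantive to add.
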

In the case of interest $\eta$ has weights $[0;1,1]$ and is real valued. Since $s=0$ the eigenspaces are 1-dimensional, and since every translation is a linear combination of these eigenfunctions we can write it in the form
\[
  \eta = A \frac{(\bar{a}\zb + \bar{b})(az + b)}{P} \qquad \text{ for } a,b \in \CC.
\]

\subsection{The equation for the co-curvature}
\label{sec:equat-co-curv}

Next, we study the equation
\begin{equation}
\etbc \B = -\ethc\cQ,\label{eq:66}
\end{equation}
for a conformal density $\B$ of type $[-2;2,-2]$. It is conformally invariant, identical to the equation $\etp\B=-\ethc\cQ$, and can be evaluated in the standard gauge. Using the well-known properties of $\etp$ on the unit-sphere we find that when acting on spin-weight $2$ quantities such as $\B$ it is an isomorphism, so that \eqref{eq:66} has a unique solution.

We can in fact find this solution as follows. Consider
\[
  \begin{multlined}
    \ethc\cQ = \eth(K-\etp\tau) - 2 \tau (K - \etp\tau) = \eth K - \eth\etp \tau - 2 \tau K + 2\tau \etp\tau\\
    = \eth K - \etp\eth \tau + 2\tau \etp\tau = \eth K -\etp(\eth\tau - \tau^2).
\end{multlined}
\]
In the cylinder gauge, when $\eth K=0$ we have $\ethc\cQ = - \etp(\eth\tau - \tau^2)$ so that uniqueness of the solution implies that $\B = \eth\tau - \tau^2$ \emph{in that gauge}.

\subsection{The equation for the translations}
\label{sec:equat-transl}

The final topic in this appendix is the equation \eqref{eq:39}. We first write out the equation in terms of the usual $\eth$ operators. We obtain
\[
  \eth^2U + (\eth\tau - \tau^2) U = \B U.
\]
In the cylinder gauge, using the expression for $\B$ as found above this equation reduces simply to
\[
  \eth^2U = 0,
\]
and the properties of $\eth$ on the unit-sphere imply that $U$ is a linear combination of the lowest four spherical harmonics, i.e., in terms of stereographic coordinates it can be written as
\begin{equation}
  \label{eq:67}
  U = \frac{a_0+a_1z+a_2\zb+a_3z\zb}{1+z\zb}
\end{equation}
for arbitrary coefficients $a_i$. When restricted to real $U$ this implies that the solution space of~\eqref{eq:39} has real dimension 4.

\section{Stokes' theorem in cGHP formalism}
\label{sec:stokes-theorem-cghp}

Here we show briefly how Stokes' theorem applied to 2-forms on $\scri$ can be expressed in terms of the cGHP formalism. We start with a general 2-form on $\scri$
\[
  \mub = \mu_0 \i\mbbar\mbb + \mu_1 \i \lbb\mbb - \mu_2 \i \lbb\mbbar.
\]
A real-valued $\mub$ must satisfy $\bar\mu_0=\mu_0$, $\bar\mu_1=\mu_2$. Stokes' theorem then relates the integral of the exterior derivative $\dd\mub$ over a 3-dimensional submanifold of $\scri$ to the integral of $\mub$ over its boundary. Thus, one needs to compute $\dd\mub$. This is a straightforward calculation involving the exterior derivatives of the frame vectors expressed in terms of spin-coefficients (see e.g., \cite{Penrose:1984a}). If one assumes that $\mub$ is a scalar valued 2-form then the result can be expressed in terms of the GHP operators acting on the components
\[
  \dd\mub = \left[\left(\thorp \mu_0 - 2\rho'\mu_0\right) + \left(\eth\mu_1 - \tau \mu_1 \right) + \left(\etp\mu_2 - \taub \mu_2 \right)\right] \, \i \lbb\mbbar\mbb.
\]
With $\mub$ being scalar valued and assuming that it is conformally invariant, i.e., with conformal weight $w=0$ the components have well-defined weights $\mu_0:[-2;0,0]$, $\mu_1:[-1;-2,0]$ and $\mu_2:[-2;0,-2]$ so that the exterior derivative may be written in the form
\begin{equation}
  \label{eq:72}
  \dd\mub = \left[\thorpc \mu_0  + \ethc\mu_1 + \etbc\mu_2 \right] \, \i \lbb\mbbar\mbb.
\end{equation}

\printbibliography
%\listoftodos[Remarks in the margins]

% \bibliography{papers,bondimass}
%\bibliographystyle{cqg}
\end{document}